\documentclass[conference,letterpaper,twoside]{IEEEtran}

\addtolength{\topmargin}{9mm}
\interdisplaylinepenalty=2500 

\usepackage[cmex10]{amsmath}
\usepackage{graphicx,epic,eepic,epsfig,latexsym,amssymb,verbatim,color,revsymb}

\newcommand{\abs}[1]{\left\lvert#1\right\rvert} 

\usepackage[english]{babel}
\usepackage[T1]{fontenc}
\usepackage{mathrsfs}
\usepackage{enumerate}
\usepackage{graphicx}
\usepackage{mathtools}
\usepackage[export]{adjustbox}
\usepackage{microtype}
\usepackage{MnSymbol}
\usepackage{lipsum}




\usepackage[colorlinks = true,
            linkcolor = red,
            urlcolor  = blue,
            citecolor = blue,
            anchorcolor = red]{hyperref}

\newtheorem{theorem}{Theorem}
\newtheorem{corollary}[theorem]{Corollary}

\newtheorem{lemma}[theorem]{Lemma}
 
\newtheorem{definition}[theorem]{Definition}  
\newtheorem{remark}[theorem]{Remark}  

\newcommand\qedsymbol{$\blacksquare$}
\newcommand\qed{\hfill\qedsymbol}

\newlength{\blank}
\settowidth{\blank}{\emph{~}}

\newenvironment{proof-of}[1][{\hspace{-\blank}}]{{\medskip\noindent\textit{Proof~{#1}.\ }}}{\hfill\qedsymbol}
\newenvironment{proof}{{\medskip\noindent\textit{Proof.\ }}}{\hfill\qedsymbol}

\newcommand{\Tr}{{\operatorname{Tr}\,}}

\newcommand{\id}{{\operatorname{id}}}

\newcommand{\1}{\openone}
\newcommand{\ket}[1]{|#1\rangle}
\newcommand{\bra}[1]{\langle #1|}

\newcommand{\proj}[1]{|#1\rangle\!\langle #1|}

\newcommand{\cE}{{\mathcal{E}}}

\newcommand{\cX}{{\mathcal{X}}}
\newcommand{\cS}{{\mathcal{S}}}


\begin{document}

\title{Entanglement-Assisted Quantum Data Compression} 

\author{%
  \IEEEauthorblockN{Zahra Baghali Khanian\IEEEauthorrefmark{1}\IEEEauthorrefmark{2}}
  \IEEEauthorblockA{\IEEEauthorrefmark{1}%
                    ICFO\\
                    Barcelona Institute of Technology\\
                    08860 Castelldefels, Spain\\
                    Email: zbkhanian@gmail.com}
  \and
  \IEEEauthorblockN{ }
  \IEEEauthorblockA{\IEEEauthorrefmark{2}%
                    Grup d'Informaci\'{o} Qu\`{a}ntica\\
                    Departament de F\'{\i}sica\\
                    Universitat Aut\`{o}noma de Barcelona\\
                    08193 Bellaterra (Barcelona), Spain}
  \and
  \IEEEauthorblockN{Andreas Winter\IEEEauthorrefmark{2}\IEEEauthorrefmark{3}}
  \IEEEauthorblockA{\IEEEauthorrefmark{3}%
                    ICREA\\
                    Pg.~Lluis Companys, 23\\
                    08010 Barcelona, Spain\\
                    Email: andreas.winter@uab.cat}
}
\maketitle

\begin{abstract}
Ask how the quantum compression of ensembles of pure states is
affected by the availability of entanglement, and in settings where
the encoder has access to side information.
We find the optimal asymptotic quantum rate and the optimal tradeoff 
(rate region) of quantum and entanglement rates. 
It turns out that the amount by which the quantum rate beats
the Schumacher limit, the entropy of the source, is precisely half
the entropy of classical information that can be extracted from the
source and side information states without disturbing them at all
(``reversible extraction of classical information'').

In the special case that the encoder has no side information, or that 
she has access to the identity of the states, this problem reduces to the known
settings of \textit{blind} and \textit{visible} Schumacher compression, respectively,
albeit here additionally with entanglement assistance. 
We comment on connections to previously studied and further rate
tradeoffs when also classical information is considered.
\end{abstract}

\section{Quantum sources with side information} 
The task of data compression of a quantum source, introduced by Schumacher
\cite{Schumacher1995}, marks one of the foundations of quantum
information theory: not only did it provide an information theoretic 
interpretation of the von Neumann entropy $S(\rho) = -\Tr\rho\log\rho$
as the minimum compression rate, it also motivated the very concept of the qubit!
(Throughout this paper, $\log$ denotes by default the binary logarithm.)
In the Schumacher modelling, a source is given by an ensemble 
$\cE = \{ p(x), \proj{\psi_x} \}$ of pure states $\psi_x=\proj{\psi_x}\in\cS(A)$,
$\ket{\psi_x}\in A$, with a Hilbert space $A$ that in this paper we shall assume to be of
finite dimension $|A|<\infty$; $\cS(A)$ denotes the set of states (density operators).
Furthermore, $x\in\cX$ ranges over a 
discrete alphabet, so that we can can describe the source equivalently by
the classical-quantum (cq) state $\omega = \sum_x p(x) \proj{x}^X \otimes \proj{\psi_x}^A$.

While the achievability of the rate $S(A)_\omega = S(\omega^A)$ was
shown in \cite{Schumacher1995,Jozsa1994_1} (see also \cite[Thm.~1.18]{OhyaPetz:entropy}),
the full (weak) converse was established in \cite{Barnum1996}, a simplified
proof being given by M. Horodecki \cite{Horodecki1998}; the strong 
converse was proved in \cite{Winter1999}. 

\medskip
In this paper, we consider a more comprehensive model, where on the one hand
the sender/encoder of the compressed data (Alice) has access to side
information, namely a pure state $\sigma_x^C$ in addition to the source state
$\psi_x^A$, and on the other hand, she and the receiver/decoder of the compressed
data (Bob) share pure state entanglement in the form of EPR pairs at a
certain rate.

Thus, the source is now an ensemble $\cE = \{ p(x), \proj{\psi_x}^A\otimes\proj{\sigma_x}^C \}$
of product states, which can be described equivalently by the cqq-state
\begin{align}
  \label{eq:source state omega}
  \omega^{XAC}=\sum_{x\in \mathcal{X}} p(x) \proj{x}^X\otimes \proj{\psi_x}^A\otimes \proj{\sigma_x}^C. 
\end{align}
Yet another equivalent description is via the
random variable $X \in \cX$, distributed according to $p$, i.e. $\Pr\{X=x\}=p_x$;
this also makes the pure states $\psi_X$ and $\sigma_X$ random variables.

We will consider the information theoretic limit of
many copies of $\omega$, i.e.~$\omega^{X^n A^n C^n} = \left(\omega^{XAC}\right)^{\otimes n}$:
\[
  \omega^{X^n A^n C^n}
    \!\!\! = \!\!\!\!
             \sum_{x^n \in \mathcal{X}^n}\!\!\!\! p(x^n) \proj{x^n}^{X^n} 
                       \!\otimes\! \proj{\psi_{x^n}}^{A^n}
                       \!\otimes\! \proj{\sigma_{x^n}}^{C^n}\!\!\!\!\!,
\]
using the notation
\begin{align*}
  x^n              &= x_1 x_2 \ldots x_n,\quad\;
  p(x^n) = p(x_1) p(x_2)  \cdots p(x_n), \\
  \ket{x^n}        &= \ket{x_1} \ket{x_2} \cdots \ket{x_n},\ 
  \ket{\psi_{x^n}} = \ket{\psi_{x_1}} \ket{\psi_{x_2}} \cdots \ket{\psi_{x_n}}.
\end{align*}

\medskip
\textbf{Further notation.}
Conditional entropy and conditional mutual information, $S(A|B)_{\omega}$ and $I(A:B|C)_{\omega}$,
respectively, are defined in the same way as their classical counterparts: 
\begin{align*}
  S(A|B)_{\omega}   &= S(AB)_\omega-S(B)_{\omega}, \text{ and} \\ 
  I(A:B|C)_{\omega} &= S(A|C)_\omega-S(A|BC)_{\omega} \\
                    &= S(AC)_\omega+S(BC)_\omega-S(ABC)_\omega-S(C)_\omega.
\end{align*}
The fidelity between two states $\omega$ and $\xi$ is defined as 
\(
 F(\omega, \xi) = \|\sqrt{\omega}\sqrt{\xi}\|_1 
                = \Tr \sqrt{\omega^{\frac{1}{2}} \xi \omega^{\frac{1}{2}}},
\) 
with the trace norm $\|X\|_1 = \Tr|X| = \Tr\sqrt{X^\dagger X}$.

\section{Compression assisted by entanglement}
\label{sec:Compression assisted by entanglement}
We assume that the encoder, Alice, and the decoder, Bob, have initially a maximally 
entangled state $\Phi_K^{A_0B_0}$ on registers $A_0$ and $B_0$ (both of dimension $K$).
With probability $p(x^n)$, the source provides Alice  
with the state $\psi_{x^n}^{A^n}\otimes\sigma_{x^n}^{C^n}$.
Then, Alice performs her encoding operation 
$\mathcal{C}:A^nC^nA_0 \longrightarrow \hat{C}^nC_A$ on the systems $A^n$, 
$C^n$ and her part $A_0$ of the entanglement, which is a quantum channel,
i.e.~a completely positive and trace preserving (CPTP) map. 
(Note that our notation is a slight abuse, which we maintain as it is simpler 
while it cannot lead to confusions, since channels really are maps between
the trace class operators on the involved Hilbert spaces.)
The dimension of the compressed system obviously has to be smaller than the 
original source, i.e. $|C_A| \leq  \abs{A}^n$. 
We call $Q=\frac1n \log|C_A|$ and $E=\frac{1}{n}\log K$ the quantum and entanglement 
rates of the compression protocol, respectively.
The system $C_A$ is then sent to Bob via a noiseless quantum channel, who performs
a decoding operation $\mathcal{D}:C_A B_0 \longrightarrow \hat{A}^n$ on the system 
$C_A$ and his part of entanglement $B_0$. 

According to Stinespring's theorem \cite{Stinespring1955}, all these CPTP maps can be 
dilated to isometries 
$V_A : A^nC^nA_0 \hookrightarrow \hat{C}^n C_A W_A$ and  
$V_B : C_A B_0 \hookrightarrow {\hat{A}^n W_B}$, 
where the new systems $W_A$ and $W_B$ are the environment systems
of Alice and Bob, respectively. 

We say the encoding-decoding scheme has fidelity $1-\epsilon$, or error $\epsilon$, if 
\begin{align}
  \label{eq:Schumcaher assisted fidelity}
  \overline{F} &:= F\left( \omega^{X^n\hat{A}^n\hat{C}^n},\xi^{X^n\hat{A}^n\hat{C}^n} \right) \nonumber\\
               &=\sum_{x^n \in \mathcal{X}^n}\!\!\! p(x^n)F\!\left(\proj{\psi_{x^n}}^{A^n}\!
                          \otimes\!\proj{\sigma_{x^n}}^{C^n}\!,\xi_{x^n}^{\hat{A}^n\hat{C}^n}\right) \\
         &\geq 1-\epsilon, \nonumber
\end{align}
where $\xi^{X^n\hat{A}^n\hat{C}^n}=\sum_{x^n}p(x^n)\proj{x}^{X^n} \otimes \xi_{x^n}^{\hat{A}^n\hat{C}^n}$ 
and 
$\xi_{x^n}^{\hat{A}^n\hat{C}^n}=(\mathcal{D}\circ\mathcal{C})\!\proj{\psi_{x^n}\!}^{A^n} \!\otimes\! \proj{\sigma_{x^n}\!}^{C^n} \!\otimes\!\Phi_K^{A_0\!B_0}\!\!$.
We say that $(E,Q)$ is an (asymptotically) achievable rate pair if for all $n$
there exist codes such that the fidelity converges to $1$, and
the entanglement and quantum rates converge to $E$ and $Q$, respectively.
The rate region is the set of all achievable rate pairs, as a subset of 
$\mathbb{R}\times\mathbb{R}_{\geq 0}$. 

Note that this means that we demand not only that Bob can reconstruct the
source states $\psi_{x^n}$ with high fidelity on average, but that Alice
retains the side information states $\sigma_{x^n}$ as well with high fidelity.

There are two extreme cases of the side information that have been considered
in the literature: 
If $C$ is a trivial system, or more generally if the states
$\sigma_x^C$ are all identical, then the aforementioned task is the 
entanglement-assisted version of \textit{blind} Schumacher compression. If $C=X$, or more
precisely $\ket{\sigma_x}=\ket{x}$, then Alice has access to classical random variable 
$X$, and the task reduces to \textit{visible} Schumacher compression with entanglement assistance.  
The blind-visible terminology is originally from \cite{Barnum1996,Horodecki2000}.

\begin{remark}
\label{remark:E=0}
In the case of no entanglement being available, i.e. $E=0$ ($K=1$), the 
problem is fully understood: The asymptotic rate $Q=S(A)$ 
from \cite{Schumacher1995,Jozsa1994_1} is achievable without touching
the side informatiomn, and it is optimal, even in the visible case
(which includes all other side informations), by the weak and strong
converses of \cite{Barnum1996,Horodecki1998} and \cite{Winter1999}. 
\qed
\end{remark}


\section{Optimal quantum rate}
To formulate the minimum compression rate under unlimited entanglement
assistance, we need the following concept.

\begin{definition}
  \label{def:reducibility}
  An ensemble of pure states 
  $\cE=\{p(x),\proj{\psi_x}^A \otimes \proj{\sigma_x}^C \}_{x\in \mathcal{X}}$ 
  is called \emph{reducible} if its states fall into two or more orthogonal subspaces.
  Otherwise the ensemble $\cE$ is called \emph{irreducible}.
  We apply the same terminology to the source cqq-state $\omega^{X A C}$.
\end{definition}

Notice that a reducible ensemble can be written uniquely as a disjoint union 
of irreducible ensembles $\mathcal{E} = \bigcupdot_{y \in \mathcal{Y}} q(y) \mathcal{E}_y$,
with a partition $\mathcal{X} = \bigcupdot_{y\in \mathcal{Y}} \mathcal{X}_y$ and 
irreducible ensembles 
$\mathcal{E}_y = \{ p(x|y), \proj{\psi_x}^A \otimes \proj{\sigma_x}^C \}_{x \in \mathcal{X}_y}$,
where $q(y)p(x|y)=p(x)$ for $x \in \mathcal{X}_y$ and 
$q(y) =\sum_{x \in \mathcal{X}_y} p(x)$.
We define the subspace spanned by the vectors of each irreducible ensemble as 
$F_y := \text{span} \{\ket{\psi_x}\otimes\ket{\sigma_x} : x \in \mathcal{X}_y\}$.
The irreducible ensembles $\mathcal{E}_y$ are pairwise orthogonal, i.e.~$F_{y'} \perp F_y$ 
for all $y' \neq y$.
We may thus introduce the random variable $Y=Y(X)$ taking values in the set 
$\mathcal{Y}$ with probability distribution $q(y)$; namely, $Y$ is a deterministic
function of $X$ such that $\Pr\{X\in\mathcal{X}_Y\}=1$. 

We define the \textit{modified} source 
$\omega^{XACY}=\sum_x p(x) \proj{x}^X\otimes \proj{\psi_x}^A \otimes \proj{\sigma_x}^C \otimes \proj{y(x)}^Y$ 
with side information systems $CY$.
Because there is an isometry $V:AC \rightarrow ACY$ which acts as
\begin{equation}
  \label{eq:iso}
  V\ket{\psi_x}^A \otimes \ket{\sigma_x}^C=\ket{\psi_x}^A \otimes \ket{\sigma_x}^C \otimes \ket{y(x)}^Y,
\end{equation}
the extended source $\omega^{XACY}$ is equivalent to the original
source and side information $\omega^{XAC}$ modulo a local operation of Alice.

We first present the optimal asymptotic compression rate in the following 
theorem and prove the achievability of it, but we leave 
the converse proof to the end of this section, as it requires introducing 
further machinery.

\begin{theorem}
  \label{theorem: main}
  For the given source $\omega^{XACY}$, the optimal asymptotic compression rate 
  assisted by unlimited entanglement is
  $Q=\frac12 (S(A)+S(A|CY))$.
  
  Furthermore, there is a protocol achieving this communication 
  rate with entanglement consumption at rate $E=\frac12 (S(A)-S(A|CY))$.
\end{theorem}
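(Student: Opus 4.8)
The plan is to reduce the task to \emph{quantum state redistribution} after first passing to a coherent picture of the source. I would purify the modified source as
$\ket{\varphi}^{RACY}=\sum_x \sqrt{p(x)}\,\ket{x}^R\ket{\psi_x}^A\ket{\sigma_x}^C\ket{y(x)}^Y$,
where $R$ is an external reference held by neither party and playing the role of the classical label $X$ (dephasing $R$ in the $\{\ket{x}\}$ basis returns $X$). Since the subspaces $F_y$ are mutually orthogonal, the label map $V$ of \eqref{eq:iso} is a genuine isometry that Alice can apply locally and reversibly at no communication cost, so creating $Y$ and keeping it alongside $C$ is free. In this picture the compression task is exactly: starting from $\ket{\varphi}^{\otimes n}$ with Alice holding $A^nC^nY^n$ and Bob holding nothing beyond shared entanglement, physically transport the $A^n$ system to Bob while leaving the global pure state intact. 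Because fidelity does not decrease under the dephasing $R\to X$ and under discarding $Y$ (both are CPTP), achieving the coherent criterion on $\ket{\varphi}$ implies the required $\overline F\to 1$ on $X^n\hat A^n\hat C^n$; it therefore suffices to work with the stronger, fully coherent figure of merit.

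Next I would invoke the optimal rates for quantum state redistribution (Devetak--Yard): to move a system from Alice to Bob in a pure state with reference $R$, Alice retaining a side system $CY$ and Bob holding a prior system $B$, the quantum-communication rate is $\frac12 I(A:R|B)$ and the net entanglement rate is $\frac12\bigl(I(A:CY)-I(A:B)\bigr)$. Here Bob's prior system $B$ is trivial, so these specialise to $Q=\frac12 I(A:R)_\varphi$ and $E=\frac12 I(A:CY)_\varphi$, evaluated on the single-copy state $\varphi$, the asymptotic i.i.d.\ rates being these single-letter expressions.

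It then remains to evaluate the two mutual informations. Since $\ket{\varphi}^{RACY}$ is pure, $S(R)=S(ACY)$ and $S(AR)=S(CY)$, whence
\[
  I(A:R)_\varphi=S(A)+S(ACY)-S(CY)=S(A)+S(A|CY),
\]
\[
  I(A:CY)_\varphi=S(A)+S(CY)-S(ACY)=S(A)-S(A|CY),
\]
and these entropies coincide with those of $\omega^{XACY}$ because they depend only on the $ACY$-marginal. This yields $Q=\frac12(S(A)+S(A|CY))$ and $E=\frac12(S(A)-S(A|CY))$, as claimed; note that it is precisely the free creation of $Y$ that replaces $S(A|C)$ by the smaller $S(A|CY)$ and so lowers both rates.

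I expect the main obstacle to be the reduction itself rather than the entropy algebra: one must argue carefully that it is the \emph{genuinely external} reference $R$, and not Alice's retained $CY$, that should be conditioned on, since naively treating all of $RCY$ as reference would only reproduce the Schumacher rate $S(A)=\frac12 I(A:RCY)_\varphi$, and the entanglement is exactly what re-establishes the $A$--$CY$ correlations more cheaply. I would also need to be careful with the entanglement bookkeeping (net versus gross ebits, and cobit/qubit conventions) to land on exactly $E=\frac12(S(A)-S(A|CY))$, and to confirm that the side-information-retention requirement---Alice keeping $\sigma_{x^n}$ coherently correlated with $X^n$---is genuinely enforced by preserving the global purification. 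A fully self-contained alternative would be to inline the redistribution protocol from Schumacher compression on typical subspaces together with entanglement-assisted coherent state transfer, but invoking state redistribution as a black box is the most economical route to achievability here.
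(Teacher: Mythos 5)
Your achievability argument is correct, and it is essentially the paper's own proof of that half of Theorem~\ref{theorem: main}: purify the modified source, run quantum state redistribution \cite{Devetak2008_2,Oppenheim2008} with $C^nY^n$ as Alice's retained side information, trivial Bob-side system, and the purifying system as reference, then push the coherent figure of merit down to the criterion~(\ref{eq:Schumcaher assisted fidelity}) by monotonicity of fidelity under the dephasing $R\to X$ and partial trace. (The paper uses a two-register purification $XX'$ where you use a single reference $R$; this is immaterial, and your entropy bookkeeping $I(A:R)=S(A)+S(A|CY)$, $I(A:CY)=S(A)-S(A|CY)$ is the same as the paper's.)

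The genuine gap is that the theorem asserts \emph{optimality}, and your proposal proves only achievability. The converse cannot be inherited from the optimality of state redistribution, because your reduction runs in the wrong direction for that purpose: a compression code only has to satisfy the weak average-fidelity criterion~(\ref{eq:Schumcaher assisted fidelity}) on the dephased state of $X^n\hat{A}^n\hat{C}^n$ --- it need not preserve coherence with the reference $R^n$, nor return or preserve any entanglement --- so it need not be (or induce) a valid QSR protocol, and the Devetak--Yard converse says nothing about it. This is not a technicality: if one could argue this way, the same reasoning applied to the unmodified source $\omega^{XAC}$ would ``prove'' an optimal rate of $\frac{1}{2}\left(S(A)+S(A|C)\right)$, which is false in general; e.g.\ for a reducible blind source ($C$ trivial, $S(Y)>0$) the true optimum is $S(A)-\frac{1}{2}S(Y)$, strictly below the value $S(A)$ that naive QSR on that purification would give. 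In other words, the weak criterion genuinely admits protocols cheaper than coherent transfer of a fixed purification, and the entire content of the converse is to show that the only such savings come from reversibly extractable classical information. The paper proves this with dedicated machinery that has no counterpart in your proposal: the quantity $I_\epsilon(\omega)$ of Definition~\ref{def:I_epsilon} (inspired by \cite{Barnum2001_2}), its properties in Lemma~\ref{lemma:I_epsilon properties} --- in particular continuity at $\epsilon=0$, subadditivity $I_\epsilon(\omega^{\otimes n})\leq n\,I_\epsilon(\omega)$, and the crucial bound $I_0(\omega)\leq S(CY)$, which rests on the irreducibility of the components $\cE_y$ --- combined with entropy-inequality chains that use decodability via the Fannes--Audenaert and Alicki--Fannes continuity bounds. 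To complete your proof you would need to supply this (or an equivalent) converse argument; nothing in your current write-up substitutes for it.
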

\begin{proof}
We first show that this rate is achievable.
Consider the following purification of $\omega^{XACY}$,
\begin{align*}
  \ket{\omega}^{XX'ACY}=\sum_x \sqrt{p(x)}\ket{x}^X\ket{x}^{X'}\ket{\psi_x}^A\ket{\sigma_x}^C\ket{y(x)}^Y,
\end{align*}
with side information systems $CY$. This is obtained from 
$\ket{\omega}^{XX'AC}=\sum_x \sqrt{p(x)}\ket{x}^X\ket{x}^{X'}\ket{\psi_x}^A\ket{\sigma_x}^C$
by Alice applying the isometry $V$ from Eq.~(\ref{eq:iso}).

We apply quantum state redistribution (QSR) \cite{Devetak2008_2,Oppenheim2008} 
as a subprotocol, where the objective is for Alice to send to Bob $A^n$,
using $C^nY^n$ as side information, while $(XX')^n$ serves as reference system;
the figure of merit is the fidelity with the original pure state $(\omega^{XX'ACY})^{\otimes n}$.
Denoting the overall encoding-decoding CPTP map 
$\Lambda:A^nC^nY^n \rightarrow \hat{A}^n\hat{C}^n\hat{Y}^n$, 
QRS gives us the first inequality of the following chain:
\begin{align*} 
  1-o(1) &\leq F\!\left( \omega^{X^nX'^nA^nC^nY^n}\!\!,
                         (\id_{X^nX'^n} \otimes \Lambda) \omega^{X^nX'^nA^nC^nY^n}\! \right) \\
         &\leq F\!\left( \omega^{X^nA^nC^nY^n}\!\!,
                         (\id_{X^n} \otimes \Lambda) \omega^{X^nA^nC^nY^n}\! \right),
\end{align*}
where the second inequality follows from monotonicity of the fidelity under partial trace.
Thus, the protocol satisfies our fidelity criterion (\ref{eq:Schumcaher assisted fidelity}).

The communication rate we obtain from QSR is
$Q = \frac{1}{2}I(A:XX') = \frac{1}{2} (S(A)+S(A|CY))$. 
Furthermore, QSR guarantees entanglement consumption at the rate 
$E = \frac12 I(A:CY) = \frac12 (S(A)-S(A|CY))$.
\end{proof}

\medskip
To prove optimality (the converse), we first need a few preparations.
The following definition is inspired by the ``reversible extraction of classical
information'' in \cite{Barnum2001_2}.

\begin{definition}
  \label{def:I_epsilon}
  For a source $\omega^{XAC}$ and $\epsilon \geq 0$, define
  \begin{align*}
    I_\epsilon(\omega) \!\!:= \!\!
                  \max_{V:AC\rightarrow \hat{A}\hat{C} W \text{ isometry}} \!\! \!\!I(X\!:\!\hat{C}W)_\xi 
          \text{ s.t. } F(\!\omega^{\!X\!A\!C\!}\!,\xi^{\!X\hat{A}\!\hat{C}\!})\! \geq \!1\!\!-\!\epsilon,    
\end{align*}
where  
\[
  \xi^{X\!\hat{A}\hat{C}W} \!\!\!
      =\!(\!\1_X \otimes V\!) \omega^{XAC}\!(\!\1_X \otimes V^{\dagger}\!) \!
      =\!\sum_x p(x) \proj{x}^X \!\otimes \proj{\xi_x}^{\!\hat{A}\hat{C}W} \!\!\!.
\]
\end{definition}
In this definition, the dimension of the environment is w.l.o.g. bounded as $|W| \leq |A|^2|C|^2$; 
hence, the optimisation is of a continuous function over a compact domain, so we have a 
maximum rather than a supremum.

\begin{lemma}
  \label{lemma:I_epsilon properties}
  The function $I_\epsilon(\omega)$ has the following properties:
  \begin{enumerate}
    \item It is a non-decreasing function of $\epsilon$. 
    \item It is concave in $\epsilon$.
    \item It is continuous for $\epsilon \geq 0$. 
    \item For any two states $\omega_1^{X_1 A_1 C_1}$ and $\omega_2^{X_2 A_2 C_2}$ and for $\epsilon \geq 0$,
          \(
            I_{\epsilon}(\omega_1 \otimes \omega_2) \leq  I_{\epsilon}(\omega_1) +I_{\epsilon}(\omega_2).
          \)
    \item For any state $\omega^{XAC}$, $I_0(\omega) \leq S(CY)$.
  \end{enumerate}
\end{lemma}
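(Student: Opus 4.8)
The first two properties are the backbone, and I would establish them directly from the definition. Monotonicity (item~1) is immediate: enlarging $\epsilon$ only relaxes the constraint $F(\omega,\xi)\geq 1-\epsilon$, so the feasible set of isometries grows and the maximum cannot decrease. For concavity (item~2), fix $\epsilon_1,\epsilon_2$ with optimal isometries $V_1,V_2$ and set $\epsilon=\lambda\epsilon_1+(1-\lambda)\epsilon_2$. The natural candidate is the coherently flagged isometry $V\ket{\varphi}=\sqrt{\lambda}\ket{1}^J V_1\ket{\varphi}+\sqrt{1-\lambda}\ket{2}^J V_2\ket{\varphi}$, where the flag $J$ is appended to the environment. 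Because the source states are pure, the per-letter fidelity of the mixed output $\lambda\xi_{1,x}+(1-\lambda)\xi_{2,x}$ obeys $\sqrt{\lambda F_{1,x}^2+(1-\lambda)F_{2,x}^2}\geq\lambda F_{1,x}+(1-\lambda)F_{2,x}$ (convexity of $t\mapsto t^2$), so averaging gives overall fidelity at least $1-\epsilon$ and $V$ is feasible. The delicate point is the mutual information: the flag carries coherence, so I would first dephase $J$ in the flag basis. Dephasing is a channel on the $\hat{C}W$ side, hence can only decrease $I(X:\hat{C}W)$, and on the dephased state the chain rule gives $I(X:\hat{C}WJ)=I(X:J)+I(X:\hat{C}W|J)=0+\lambda I_{\epsilon_1}(\omega)+(1-\lambda)I_{\epsilon_2}(\omega)$, since $J$ ends up independent of $X$ and conditioning on $J$ recovers $V_1$ resp.\ $V_2$. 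This yields $I_\epsilon(\omega)\geq\lambda I_{\epsilon_1}(\omega)+(1-\lambda)I_{\epsilon_2}(\omega)$.

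Continuity (item~3) then follows in two stages. On the open half-line, concavity of a finite real function forces continuity automatically. Only the endpoint $\epsilon=0$ needs care, since a concave function may in principle jump downward at a boundary point; here I would use compactness. Take $\epsilon_k\downarrow 0$ with near-optimal isometries $V_k$ (all with $|W|\leq|A|^2|C|^2$), pass to a convergent subsequence $V_k\to V_\ast$, note the limiting isometry satisfies $F=1$ and is thus feasible at $\epsilon=0$, and invoke continuity of the entropy (hence of $I(X:\hat{C}W)$) in finite dimension to conclude $I_0(\omega)\geq\lim_k I_{\epsilon_k}(\omega)$; combined with monotonicity this gives right-continuity at $0$. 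For subadditivity (item~4) I would split the joint quantity with the chain rule, $I(X_1X_2:\hat{C}W)=I(X_2:\hat{C}W)+I(X_1:\hat{C}W\,|\,X_2)$, and bound each summand by reducing to a single-copy problem. For the first term, feeding a purification of $\omega_1^{A_1C_1}$ into the joint isometry turns it into a feasible $\omega_2$-strategy whose mutual information, after moving the discarded registers to the kept side, dominates $I(X_2:\hat{C}W)$ and is at most $I_\epsilon(\omega_2)$. For the second term, conditioning on $X_2=x_2$ fixes a pure input on $A_2C_2$, yielding a feasible $\omega_1$-strategy with some error $\epsilon_{x_2}$ satisfying $\sum_{x_2}p_2(x_2)\epsilon_{x_2}\leq\epsilon$; averaging and using concavity (item~2) together with monotonicity (item~1) through Jensen's inequality bounds this by $I_\epsilon(\omega_1)$. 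Hence item~4 rests on items~1 and~2.

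For the last property I would analyse the structure forced by $\epsilon=0$. Perfect fidelity means $\xi_x^{\hat{A}\hat{C}}=\proj{\psi_x}\otimes\proj{\sigma_x}$, so the dilating isometry must act as $V\ket{\psi_x}\ket{\sigma_x}=\ket{\psi_x}\ket{\sigma_x}\ket{\theta_x}^W$ for some environment states $\theta_x$ (a pure state with pure marginal is a product). Preserving inner products forces $\braket{\theta_x}{\theta_{x'}}=1$ whenever $\ket{\psi_x}\ket{\sigma_x}$ and $\ket{\psi_{x'}}\ket{\sigma_{x'}}$ are non-orthogonal; since within each irreducible component the non-orthogonality graph is connected (disconnection would split the component into orthogonal pieces, contradicting irreducibility), $\theta_x$ depends only on $y(x)$. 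Thus $W$ is obtained from the classical label $Y$ by the preparation channel $\ket{y}\mapsto\ket{\theta_y}$, and data processing gives $I(X:\hat{C}W)\leq I(X:\hat{C}Y)$. The latter equals $S(\hat{C}Y)=S(CY)$ because the conditional entropy $S(\hat{C}Y|X)$ vanishes on pure conditional states. Maximising over feasible isometries yields $I_0(\omega)\leq S(CY)$ (in fact with equality).

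The routine parts are items~1 and~3; the substance is concentrated in the concavity argument and in item~5. I expect the main obstacle to be handling the coherence of the flag register in item~2: a probabilistic mixture of isometries is not an isometry, so the construction must be coherent, and one has to argue carefully that decohering the flag lower-bounds the mutual information while leaving the (pure-state) fidelity constraint intact. This same concavity is what powers the Jensen step in the subadditivity proof, so getting item~2 exactly right is the linchpin of the whole lemma; item~5, by contrast, is mainly a matter of correctly identifying that irreducibility is precisely connectivity of the non-orthogonality graph.
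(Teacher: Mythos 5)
Your proof is correct, and it is a mix of the paper's route and genuinely different arguments. Items 1--3 are essentially the paper's: your coherent flag $J$ plus explicit dephasing is what the paper implements with a two-qubit flag $RR'$ (the superposition $\sqrt{\lambda}\ket{\xi_x}\ket{00}+\sqrt{1-\lambda}\ket{\zeta_x}\ket{11}$, followed by the chain rule and strong subadditivity to drop $I(X:R)$ and $I(X:R'|\hat{C}WR)$ --- the second flag qubit is just a purification of your dephasing channel), and your compactness/semicontinuity argument at $\epsilon=0$ is the same as the paper's. The real differences are in items 4 and 5. For item 4 you split $I(X_1X_2:\hat{C}W)=I(X_2:\hat{C}W)+I(X_1:\hat{C}W|X_2)$ and handle the conditional term by fixing $X_2=x_2$, collecting per-letter errors with $\sum_{x_2}p_2(x_2)\epsilon_{x_2}\leq\epsilon$, and then applying Jensen via concavity and monotonicity; the paper never conditions: it builds \emph{both} single-copy strategies by absorbing the purification $\ket{\omega_i}^{X_iX_i'A_iC_i}$ of the other source into the environment, and recombines via $I(X_1:\hat{C}_1W_1)+I(X_2:\hat{C}_2W_2)\geq I(X_1:\hat{C}_1\hat{C}_2W)+I(X_2:\hat{C}_1\hat{C}_2WX_1)=I(X_1X_2:\hat{C}_1\hat{C}_2W)$, using data processing and independence of $X_1,X_2$. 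Both are valid; the paper's version keeps item 4 logically independent of items 1--2, whereas yours makes concavity a prerequisite (harmless, since you prove it first). For item 5 the paper runs an entropy chain ($I(X:\hat{C}W)=I(XY:\hat{C}W)$, chain rule, purity of $\hat{C}$ given $x$, strong subadditivity) and then \emph{cites} \cite{Barnum2001_2} for the crucial fact $I(X:W|Y)=0$ on irreducible components; you instead prove that fact from scratch --- perfect fidelity forces $V\ket{\psi_x}\ket{\sigma_x}=\ket{\psi_x}\ket{\sigma_x}\ket{\theta_x}$, inner-product preservation forces $\theta_x$ to be constant on connected components of the non-orthogonality graph, and irreducibility is exactly connectivity --- finishing with one data-processing step $I(X:\hat{C}W)\leq I(X:\hat{C}Y)=S(CY)$. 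Your item 5 is thus self-contained where the paper's is not (it inlines exactly the structural argument the paper outsources), and your parenthetical claim of equality, realized by the isometry attaching $\ket{y(x)}$, is also correct though not needed for the lemma.
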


\begin{proof}
1. The definition of $I_\epsilon(\omega)$ directly implies that it 
is a non-decreasing function of $\epsilon$.

2. To prove the concavity, let $V_1:AC \rightarrow \hat{A}\hat{C}W$ and 
$V_2:AC \rightarrow \hat{A}\hat{C}W$ be the isometries attaining the 
maximum for $\epsilon_1$ and $\epsilon_2$, respectively, which act as 
follows:
\[
    V_1 \ket{\psi_x}^A\ket{\sigma_x}^C=\ket{\xi_x}^{\hat{A}\hat{C}W} \text{ and } 
    V_2 \ket{\psi_x}^A\ket{\sigma_x}^C=\ket{\zeta_x}^{\hat{A}\hat{C}W}.   
\]
For $0\leq \lambda \leq 1$,
define the isometry $U:AC \rightarrow \hat{A}\hat{C}WRR'$ by letting, for all $x$,
\[
  U\!\ket{\psi_x}^A\!\ket{\sigma_x}^C 
    \!:=\!\sqrt{\!\lambda}\ket{\xi_x}^{\hat{A}\hat{C}W}\!\!\ket{00}^{RR'}\!\!\!+\!\!\sqrt{1\!-\!\lambda}\ket{\zeta_x}^{\hat{A}\hat{C}W}\!\!\ket{11}^{RR'},
\]
where systems $R$ and $R'$ are qubits. 
Then, the reduced state on the systems $X\hat{A}\hat{C}$ is 
$\tau^{X\hat{A}\hat{C}}=\sum_x p(x) \proj{x}^X\otimes \tau_x^{\hat{A}\hat{C}}$, 
where $\tau_x^{\hat{A}\hat{C}}=\lambda \xi_x^{\hat{A}\hat{C}}+(1-\lambda)\zeta_x^{\hat{A}\hat{C}}$; 
therefore, the fidelity is bounded as follows:
\begin{align*}
  F(\omega^{XA\hat{C}}\!,\tau^{X\hat{A}\hat{C}}) 
    &= \sum_x p(x) \sqrt{\bra{\psi_x}
                          \left(\lambda \xi_x^{\hat{A}\hat{C}}+(1-\lambda)\zeta_x^{\hat{A}\hat{C}}\right)
                         \ket{\psi_x}} \\
    &\!\!\!\!\!\!\!\!\!\!\!\!\!\!\!\!\!\!\!\!\!\!\!\!\!\!\!\!\!\!\!\!\!\!\!\!\!\!\!\!\!
     \geq \lambda \sum_x p(x) \sqrt{\!\bra{\psi_x} \xi_x^{\hat{A}\hat{C}} \ket{\psi_x}\!}
          + (1-\lambda)\sum_x p(x) \sqrt{\!\bra{\psi_x}\zeta_x^{\hat{A}\hat{C}} \ket{\psi_x}\!} \\
    &\!\!\!\!\!\!\!\!\!\!\!\!\!\!\!\!\!\!\!\!\!\!\!\!\!\!\!\!\!\!\!\!\!\!\!\!\!\!\!\!\!
     \geq 1-\left( \lambda\epsilon_1 +(1-\lambda)\epsilon_2 \right),
\end{align*}
where the second line follows from the concavity of the function $\sqrt{x}$, 
and the last line follows by the definition of the isometries $V_1$ and $V_2$.
Now, define $W':=WRR'$ and let $\epsilon=\lambda\epsilon_1 +(1-\lambda)\epsilon_2$. 
According to Definition \ref{def:I_epsilon}, we obtain
\begin{align*}
  I_\epsilon(\omega) &\geq I(X:\hat{C}W')_{\tau}\\ 
                     &=    I(X:R)_{\tau}+I(X:\hat{C}W|R)_{\tau}+I(X:R'|\hat{C}WR)_{\tau}\\
                     &\geq I(X:\hat{C}W|R)_{\tau}
                      =    \lambda I_{\epsilon_1}(\omega)+(1-\lambda)I_{\epsilon_2}(\omega),
\end{align*}
where the third line is due to strong subadditivity of the quantum mutual information.

3. The function is non-decreasing and concave for $\epsilon \geq 0 $, so it is continuous 
for $\epsilon > 0 $. 
The concavity implies furthermore that $I_{\epsilon}$ is lower semi-continuous at 
$\epsilon=0$. On the other hand, since the fidelity and mutual information are 
both continuous functions of CPTP maps, and the domain of the optimization is a 
compact set, we conclude that $I_\epsilon(\omega)$ is also upper semi-continuous at 
$\epsilon=0$, so it is continuous at $\epsilon=0$ \cite[Thms.~10.1,~10.2]{Rockafeller}. 

4. In the definition of $I_{\epsilon}(\omega_1 \otimes \omega_2)$, let the isometry 
$V_0:A_1 C_1 A_2C_2 \rightarrow \hat{A}_1\hat{C}_1\hat{A}_2\hat{C}_2 W$ be the 
one attaining the maximum which acts on the purified source state with purifying 
systems $X_1'$ and $X_2'$ as follows: 
\begin{align*}
  \ket{\xi}&^{X_1\!X_1'\!X_2\!X_2'\!\hat{A}_1\!\hat{C}_1\!\hat{A}_2\!\hat{C}_2\!W} \\
           &\phantom{====}
            =(\1_{X_1\!X_1'\!X_2\!X_2'}\otimes V_0)\ket{\omega_1}^{X_1\!X'_1\!A_1\!C_1}
                                                   \ket{\omega_1}^{X_2\!X'_2\!A_2\!C_2}\!\!.
\end{align*}
Now, define the isometry $V_1:A_1 C_1 \rightarrow \hat{A}_1\hat{C}_1\hat{A}_2 \hat{C}_2W X_2X'_2$ 
acting only on the systems $A_1 C_1$ with the output state $\hat{A}_1\hat{C}_1$ and the 
environment $W_1:=\hat{A}_2\hat{C}_2 W X_2X'_2$ as follows:
\[
  \ket{\xi}^{X_1\!X_1'\!X_2\!X_2'\!\hat{A}_1\!\hat{C}_1\!\hat{A}_2\!\hat{C}_2\!W}
                           = (\1_{X_1X_1'}\otimes V_1)\ket{\omega_1}^{X_1X_1'A_1C_1}. 
\]
Hence, we obtain
\begin{align*}
  F(\omega_1^{X_1\!A_1\!C_1}\!,\xi^{X_1\!\hat{A}_1\!\hat{C}_1})
                             &\geq F\!\left(\omega_1^{X_1\!A_1\!C_1}\!\otimes\!\omega_2^{X_2\!A_2\!C_2}\!,
                                       \xi^{X_1\!X_2\!\hat{A}_1\!\hat{C}_1\!\hat{A}_2\!\hat{C}_2}\!\right) \\
                             &\geq 1 - \epsilon, 
\end{align*}
where the first inequality is due to monotonicity of the fidelity under 
CPTP maps, and the second inequality follows by the definition of $V_0$. 
Consider the isometry 
$V_2:A_2 C_2 \rightarrow \hat{A}_1\hat{C}_1\hat{A}_2 \hat{C}_2W X_1X'_1$ 
defined in a similar way, with the output state $\hat{A}_2\hat{C}_2$ and 
the environment $W_2:=\hat{A}_1\hat{C}_1 W X_1X'_1$. 
Therefore, we obtain
\begin{align*}
    I_{\epsilon}(\omega_1) +I_{\epsilon}(\omega_2) &\geq I(X_1:\hat{C}_1W_1)+I(X_2:\hat{C}_2W_2)\\
    &\geq I(X_1:\hat{C}_1\hat{C}_2W)+I(X_2:\hat{C}_1\hat{C}_2WX_1)\\
    &= I(X_1X_2:\hat{C}_1\hat{C}_2W)
     = I_{\epsilon}(\omega_1 \otimes \omega_2),
\end{align*}
where the second line is due to data processing. 

5. In the definition of $I_0(\omega)$ let $V_0:AC \rightarrow \hat{A}\hat{C}W$ be the isometry attaining the maximum with $F(\omega^{XAC},\xi^{X\hat{A}\hat{C}})=1$. Hence, we obtain
\begin{align*}
  I_0(\omega)&= I(X:\hat{C}W)
       =    I(XY:\hat{C}W) \\
      &=    I(Y:\hat{C}W)+I(X:\hat{C}W|Y) \\
      &\leq S(Y)+I(X:\hat{C}W|Y) \\
      &=    S(Y)+I(X:W|Y)+I(X:\hat{C}|WY) \\
      &\leq S(Y)+I(X:W|Y)+S(C|WY)\\
      &\leq S(Y)+I(X:W|Y)+S(C|Y),
\end{align*}
where the first line follows because $Y$ is a function of $X$. The second and fourth
line are due to the chain rule. The third line follows because for the classical 
system $Y$ the conditional entropy $S(Y|\hat{C}W)$ is non-negative. The penultimate 
line follows because for any $x$ the state on the system $\hat{C}$ is pure. The 
last line is due to strong sub-additivity of the entropy. Furthermore, for every 
$y$, the ensemble $\cE_y$ is irreducible; hence, the conditional mutual information 
$I(X:W|Y)=0$ which follows from the detailed discussion on page 2028 of \cite{Barnum2001_2}. 
%
%
%
\end{proof}

\begin{proof-of}[{of the converse part of Theorem \ref{theorem: main}}]
We start by observing 
\[
  nQ+S(B_0) \geq S(C_A)+S(B_0)
            \geq S(C_A B_0)    
            =    S(\hat{A}^n W_B),
\]
where the second inequality is due to subadditivity of the entropy, 
and the equality follows because the decoding isometry $V_B$ does not change 
the entropy. Hence, we get 
\begin{align}
  \label{eq:converse Schumacher assisted 1}
  nQ+S(B_0) &\geq S(\hat{A}^n)+S(W_B|\hat{A}^n)            \nonumber\\
    &\geq S(\hat{A}^n)+S(W_B|\hat{A}^n X^n)                \nonumber\\
    &\geq S(A^n)+S(W_B|\hat{A}^n X^n)-n \delta(n,\epsilon) \nonumber\\
    &=    S(A^n) \!+\! S(\hat{A}^nW_B| X^n\!) \!-\! S(\hat{A}^n| X^n)\!-\!n \delta(n,\epsilon)\nonumber\\
    &=    S(A^n) \!+\! S(\hat{C}^nW_A| X^n) \!-\! S(\hat{A}^n| X^n) \!-\! n \delta(n,\epsilon)\nonumber\\
    &\geq S(A^n)+S(\hat{C}^nW_A| X^n)- 3n \delta(n,\epsilon),
\end{align}
where in the first and second line we use the chain rule and subadditivity of entropy.
The inequality in the third line follows from the decodability of the system $A^n$:
the fidelity criterion (\ref{eq:Schumcaher assisted fidelity}) implies that the 
output state on systems $\hat{A}^n$ is $2\sqrt{2\epsilon}$-close to the 
original state $A^n$ in trace norm; then apply the Fannes-Audenaert inequality 
\cite{Fannes1973,Audenaert2007} where 
$\delta(n,\epsilon)=\sqrt{2\epsilon} \log|A| + \frac1n h(\sqrt{2\epsilon})$.   
The equalities in the fourth and the fifth line are due to the chain rule 
and the fact that  for any $x^n$ the overall state of $\hat{A}^n\hat{C}^nW_AW_B$ is pure.
In the last line, we use the decodability of the systems $X^nA^n$, that is the 
output state on systems $X^n\hat{A}^n$ is $2\sqrt{2\epsilon}$-close to the 
original states $X^nA^n$ in trace norm, then we apply the Alicki-Fannes 
inequality \cite{Alicki2004,Winter2016}. 

Moreover, we bound $Q$ as follows:
\begin{align}\label{eq:converse Schumacher assisted 2}
  nQ &\geq S(C_A)                     
      \geq S(C_A|\hat{C}^nW_A)           \nonumber \\
     &=    S(A^nC^nA_0) -S(\hat{C}^nW_A) \nonumber \\
     &=    S(A^nC^nY^n)+S(A_0) -S(\hat{C}^nW_A),
\end{align}
where the first equality follows because the encoding isometry 
$V_A:A^nC^nA_0 \rightarrow C_A\hat{C}^nW_A$ does not the change the entropy. 
Adding Eqs. (\ref{eq:converse Schumacher assisted 1}) and 
(\ref{eq:converse Schumacher assisted 2}), we thus obtain
\begin{align}\label{eq:converse Schumacher}
  Q &\geq \frac{1}{2}(S(A)+S(ACY))-\frac{1}{2n}I(\hat{C}^nW_A:X^n)-\frac{3}{2}\delta(n,\epsilon)\nonumber \\
  &\geq \frac{1}{2}(S(A)+S(ACY))-\frac{1}{2n}I(\hat{C}^nW_AW_B:X^n)-\frac{3}{2}\delta(n,\epsilon)\nonumber \\
  &\geq \frac{1}{2}(S(A)+S(ACY))-\frac{1}{2n} I_\epsilon(\omega^{\otimes n})
                                -\frac{3}{2}\delta(n,\epsilon) \nonumber \\
  &\geq \frac{1}{2}(S(A)+S(ACY))-\frac{1}{2} I_\epsilon(\omega)-\frac{3}{2}\delta(n,\epsilon) \nonumber 
\end{align}
where the second line is due to data processing. The third line follows from 
Definition \ref{def:I_epsilon}. The last line follows from point 4 of Lemma 
\ref{lemma:I_epsilon properties}. In the limit of $\epsilon \to 0$ and 
$n \to \infty $, the rate is bounded by
\begin{align*}
  Q &\geq \frac{1}{2}(S(A)+S(ACY))-\frac{1}{2} I_0(\omega)\\
    &\geq \frac{1}{2}(S(A)+S(ACY))-\frac{1}{2}S(CY)\\
    &=    \frac{1}{2}(S(A)+S(A|CY)),
\end{align*}
where the first line follows from point 3 of Lemma \ref{lemma:I_epsilon properties}
stating that $I_\epsilon(\omega)$ is continuous at $\epsilon=0$. 
The second line is due to point 5 of Lemma \ref{lemma:I_epsilon properties}. 
\end{proof-of}

\section{Complete rate region}
In this section, we find the complete rate region of achievable rate pairs $(E,Q)$.

\begin{theorem}
  \label{theorem:complete rate region}
  For the source $\omega^{XACY}$, all asymptotically achievable entanglement and 
  quantum rate pairs $(E,Q)$ satisfy
  \begin{align*}
    Q   &\geq \frac{1}{2}(S(A)+S(A|CY)),\\
    Q+E &\geq  S(A). 
  \end{align*}
  Conversely, all the rate pairs satisfying the above inequalities are achievable.
\end{theorem}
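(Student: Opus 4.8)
The plan is to handle the converse and the achievability separately, leaning on the single-letter results already established. For the converse, the bound $Q \geq \frac{1}{2}(S(A)+S(A|CY))$ is exactly the converse of Theorem~\ref{theorem: main}: that derivation nowhere used a restriction on the entanglement rate, so it applies verbatim here. The second bound $Q+E \geq S(A)$ I would extract from the same entropy chain. In Eq.~(\ref{eq:converse Schumacher assisted 1}) we already have $nQ + S(B_0) \geq S(A^n) + S(\hat{C}^nW_A \mid X^n) - 3n\delta(n,\epsilon)$. Since $X^n$ is classical, $S(\hat{C}^nW_A \mid X^n) = \sum_{x^n} p(x^n)\, S(\hat{C}^nW_A)_{x^n}$ is an average of genuine von Neumann entropies and hence non-negative; using $S(B_0) = \log K = nE$ and discarding this term gives $n(Q+E) \geq S(A^n) - 3n\delta(n,\epsilon) = nS(A) - 3n\delta(n,\epsilon)$. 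Dividing by $n$ and letting $\epsilon \to 0$, $n \to \infty$ yields $Q+E \geq S(A)$.

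For achievability, I would first record the geometry. The set cut out by the two inequalities together with $Q \geq 0$ is the closure, upward in both coordinates, of a boundary consisting of a vertical ray and a sloped ray meeting at the corner $(E_*,Q_*) = \bigl(\frac{1}{2}(S(A)-S(A|CY)),\,\frac{1}{2}(S(A)+S(A|CY))\bigr)$; since $Q_*+E_* = S(A)$, the corner lies on the line $Q+E=S(A)$, and $Q_* = \frac{1}{2}I(A:XX') \geq 0$ so the corner is legitimate. The corner itself is achieved exactly by the protocol of Theorem~\ref{theorem: main}. Every point with $E \geq E_*$ and $Q = Q_*$ (the vertical ray) is then achieved by running that same protocol and leaving the surplus EPR pairs untouched, since extra available entanglement cannot hurt.

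It remains to reach the sloped ray $\{(E,\,S(A)-E) : E \leq E_*\}$. Here I would invoke the standard entanglement--communication trade-off: one use of the noiseless qubit channel distributes one ebit (Alice prepares a local EPR pair and sends half to Bob). Starting from the Theorem~\ref{theorem: main} protocol, which needs $E_*$ ebits and $Q_*$ qubits, replace $E_*-E$ of the pre-shared ebits by distributing them through the channel at a cost of $E_*-E$ extra qubits; this gives the rate pair $(E,\,Q_*+(E_*-E)) = (E,\,S(A)-E)$ for every $E \leq E_*$, the regime $E<0$ corresponding to net entanglement generation (distribute $E_*$ ebits to feed the protocol plus a further $|E|$ ebits handed to the output, at total communication cost $Q_* + E_* + |E| = S(A)-E$). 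Because perfect entanglement distribution introduces no additional error, the composed protocol still meets the fidelity criterion~(\ref{eq:Schumcaher assisted fidelity}) asymptotically. Monotonicity in $Q$ — wasting channel uses by embedding $C_A$ into a larger register — then fills the interior above the sloped ray, and monotonicity in $E$ fills the region to the right of the vertical ray, so the three constructions together exhaust the region.

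The converse is essentially free given the machinery of Section~III, so the genuine work lies in the achievability. There the main point to verify carefully is that the entanglement-distribution step composes with the quantum state redistribution of Theorem~\ref{theorem: main} without spoiling the average-fidelity guarantee: the distributed EPR pairs must be precisely the maximally entangled resource that the QSR subprotocol consumes, so that the two stages chain together with total error still vanishing. A secondary, purely bookkeeping obstacle is confirming the claimed corner-and-two-rays shape of the feasible set, so that the constructions above are genuinely exhaustive.
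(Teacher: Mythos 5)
Your proposal is correct, and two of its three components coincide with the paper's own proof: the bound $Q \geq \frac{1}{2}(S(A)+S(A|CY))$ is taken over from Theorem~\ref{theorem: main} (correctly observing that its converse never constrains the entanglement rate, since the $S(A_0)$ and $S(B_0)$ terms cancel), and the achievability is the same three-step construction (QSR corner point, trading one qubit of communication for one distributed ebit along the line $Q+E=S(A)$, including the net-entanglement-generation regime $E<0$, and resource wasting for the rest). Where you genuinely diverge is the sum-rate converse $Q+E \geq S(A)$. The paper proves it by a reduction: spend an additional communication rate $E$ to distribute the ebits first, turning any assisted code into an unassisted code of rate $Q+E$ with identical fidelity, and then invoke the known unassisted converse of Remark~\ref{remark:E=0}. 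You instead extract the bound directly from the entropy chain~(\ref{eq:converse Schumacher assisted 1}): dropping the term $S(\hat{C}^n W_A|X^n)$, which is indeed non-negative because $X^n$ is classical, and substituting $S(B_0)=\log K = nE$ gives $n(Q+E)\geq nS(A)-3n\delta(n,\epsilon)$, which is a valid derivation. The trade-off is this: your route is self-contained, reusing machinery already on the page and avoiding any appeal to the unassisted-compression literature; but because it rests on the Fannes--Audenaert and Alicki--Fannes continuity terms $\delta(n,\epsilon)$, it yields only a weak converse. The paper's reduction, by contrast, inherits the \emph{strong} converse of \cite{Winter1999}, a fact the Discussion section explicitly relies on when it asserts that ``the lower bound $Q+E\geq S(A)$ \ldots holds with a strong converse (see the proof and \cite{Winter1999}).'' With your proof in place of the paper's, that later remark would lose its justification, so the reduction argument is not merely a stylistic choice.
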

\begin{proof}
The first inequality comes from Theorem \ref{theorem: main}. 
For the second inequality, consider any code with quantum communication
rate $R$ and entanglement rate $E$. By using an additional communication
rate $E$, Alice and Bob can distribute the entanglement first, and then
apply the given code, converting it into one without preshared 
entanglement and communication rate $Q+E$, having exactly the same
fidelity. By Remark \ref{remark:E=0}, $Q+E \geq S(A)$.

As for the achievability, the corner point 
$(\frac{1}{2} I(A:CY),\frac{1}{2}(S(A)+S(A|CY)))$ is achievable, 
because QSR which is used as the achievability protocol in 
Theorem \ref{theorem: main} uses $\frac{1}{2} I(A:CY)$ ebits of 
entanglement between Alice and Bob. 
Furthermore, all the points on the line $Q+E = S(A)$ for 
$Q \geq \frac{1}{2}(S(A)+S(A|CY))$ are achievable because one 
ebit can be distributed by sending a qubit. 
All other rate pairs are achievable by resource wasting. The rate region is depicted in
Fig.~\ref{fig:E-Q}
\end{proof}

\begin{figure}[ht] 
  \includegraphics[width=0.49\textwidth]{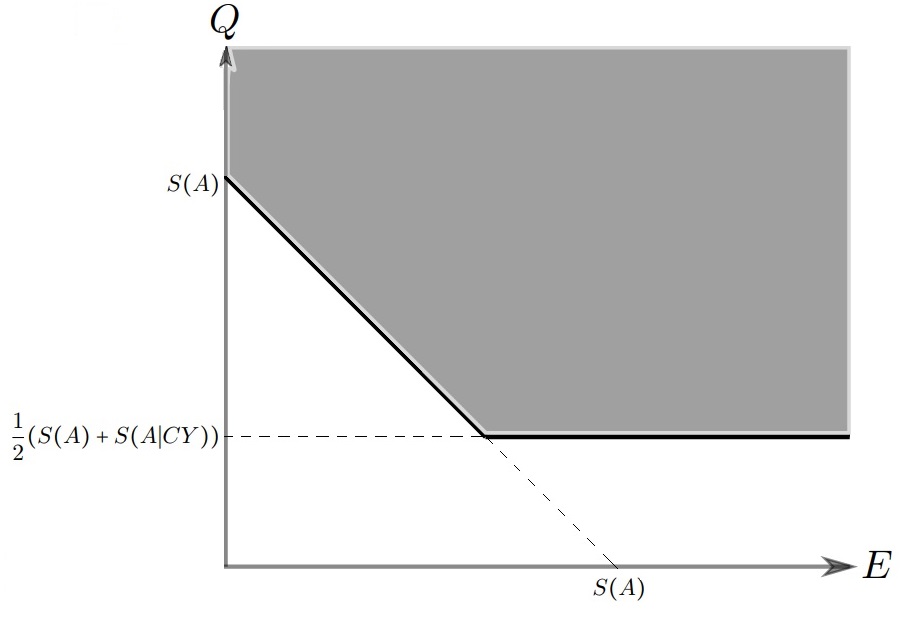}
  \caption{The optimal rate region of quantum and entanglement rates.}
  \label{fig:E-Q}
\end{figure}

\section{Discussion}
First of all, let us look what our result tell us in the cases of
blind and visible compression. 

\begin{corollary}
  \label{corollary:blind}
  In blind compression (i.e. if $C$ is trivial, or more generally the
  states $\sigma_x$ are all identical), the compression of the source 
  $\omega^{XACY}$ reduces to the entanglement-assisted Schumacher 
  compression for which Theorem \ref{theorem: main} gives the optimal asymptotic quantum rate
  \[
    Q = \frac{1}{2}(S(A)+S(A|Y))=S(A)-\frac{1}{2}S(Y).
  \]
  This implies that if the source is irreducible, then this rate is equal to the
  Schumacher limit $S(A)$. In other words, the entanglement does not help the 
  compression. Moreover, due to Theorem \ref{theorem:complete rate region}, a 
  rate $\frac{1}{2}S(Y)$ of entanglement is consumed in the compression,
  and $E+Q\geq S(A)$ in general.
\qed
\end{corollary}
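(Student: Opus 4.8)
The plan is to obtain every assertion by specializing Theorems~\ref{theorem: main} and~\ref{theorem:complete rate region} to the blind setting and then verifying a single entropic identity. First I would make the reduction precise: if $C$ is trivial, or more generally if all $\sigma_x$ coincide, then $C$ is in a fixed pure state, so $S(C)=0$ and conditioning on $C$ is vacuous. Consequently the rate formula $Q=\frac{1}{2}(S(A)+S(A|CY))$ of Theorem~\ref{theorem: main} collapses to $Q=\frac{1}{2}(S(A)+S(A|Y))$, and the entanglement cost $E=\frac{1}{2}(S(A)-S(A|CY))$ becomes $E=\frac{1}{2}(S(A)-S(A|Y))$. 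This already delivers the first displayed expression for $Q$.

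The only genuine computation is the identity $S(A|Y)=S(A)-S(Y)$, equivalently $S(AY)=S(A)$. The key point is that $Y$ merely labels the orthogonal subspaces $F_y$ of the irreducible decomposition, so the reduced source state is block diagonal: $\omega^A=\bigoplus_y q(y)\,\omega_y^A$, where $\omega_y^A=\sum_{x\in\mathcal{X}_y}p(x|y)\,\psi_x$ is supported on $F_y$ and $F_{y'}\perp F_y$ for $y'\neq y$. For such a direct sum the entropy splits as $S(A)=S(Y)+\sum_y q(y)\,S(\omega_y^A)$, while the cq-state $\omega^{AY}=\sum_y q(y)\,\omega_y^A\otimes\proj{y}^Y$ has eigenvalues $\{q(y)\lambda_{y,i}\}$ identical to those of $\omega^A$, hence $S(AY)=S(Y)+\sum_y q(y)\,S(\omega_y^A)$ as well. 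Therefore $S(AY)=S(A)$, which upon substitution yields both the second form $Q=S(A)-\frac{1}{2}S(Y)$ and the entanglement consumption $E=\frac{1}{2}S(Y)$.

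With the identity established, the remaining statements are immediate. If the ensemble is irreducible then $Y$ is constant, so $S(Y)=0$ and $Q=S(A)$, matching the Schumacher limit recalled in Remark~\ref{remark:E=0}; thus unlimited entanglement cannot push the rate below $S(A)$ in the irreducible case. The bound $E+Q\geq S(A)$ is inherited verbatim from Theorem~\ref{theorem:complete rate region}, and the achievability protocol (QSR) used there attains it with equality while consuming exactly $E=\frac{1}{2}S(Y)$ ebits.

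I expect the only real obstacle to be making the block-diagonal structure of $\omega^A$ rigorous: one must argue that orthogonality of the subspaces $F_y$ at the level of the vectors $\ket{\psi_x}$ (with $\ket{\sigma_x}$ now a fixed common vector) genuinely forces $\omega^A$ to be block diagonal with respect to $\bigoplus_y F_y$, so that the additivity-of-entropy-over-orthogonal-blocks formula applies and $S(AY)=S(A)$ follows. Everything else is direct substitution into the two preceding theorems.
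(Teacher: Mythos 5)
Your proposal is correct and follows essentially the same route as the paper, which states the corollary as an immediate specialization of Theorems~\ref{theorem: main} and~\ref{theorem:complete rate region} without further proof. Your explicit verification of the key identity $S(A|Y)=S(A)-S(Y)$ (equivalently $S(AY)=S(A)$), via the block-diagonal structure of $\omega^A$ over the orthogonal subspaces of the irreducible decomposition, is exactly the step the paper leaves implicit, and your handling of it is sound.
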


\medskip
The blind compression of a source $\omega^{XAY}$ is also considered 
in \cite{Barnum2001_2}, but there instead of entanglement, a noiseless classical
channel was assumed in addition to the quantum channel.
It was shown that the optimal quantum rate assisted with free classical communication 
is equal to $S(A)-S(Y)$, while a rate $S(Y)$ of classical communication suffices.
By sending the classical information using dense coding \cite{Bennett1992},
spending $\frac12$ ebit and $\frac12$ qubit per cbit, we can recover the 
quantum and entanglement rates of Corollary \ref{corollary:blind}.
This means that our converse implies the optimality of the quantum rate
from \cite{Barnum2001_2}.

Thus we are motivated to look at a modified compression model where the resources
used are classical communication and entanglement. Namely, we let Alice and Bob 
share entanglement at rate $E$ and use classical communication at rate $C$, but
otherwise the objective is the same as in Section \ref{sec:Compression assisted by entanglement};
define the rate region as the set of all asymptotic achievable classical 
communication and entanglement rate pairs $(C,E)$, such that the decoding fidelity
asymptotically converges to $1$.  

\begin{theorem}
For a source $\omega^{XAY}$, a rate pair $(C,E)$ is achievable if and only if
\begin{align*}
  C \geq 2S(A)-S(Y),\ 
  E \geq  S(A)-S(Y). 
\end{align*}
\end{theorem}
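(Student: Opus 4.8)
The plan is to prove the two inequalities as matching achievability and converse bounds, reducing each direction to resources we already control: the entanglement-assisted quantum rate of Theorem~\ref{theorem: main}, the blind-compression result of \cite{Barnum2001_2}, and the two primitive conversions teleportation ($1$ qubit $\leftarrow$ $2$ cbits $+$ $1$ ebit) and dense coding ($2$ cbits $\leftarrow$ $1$ qubit $+$ $1$ ebit, \cite{Bennett1992}). Throughout I use $S(A|Y)=S(A)-S(Y)$, as recorded in Corollary~\ref{corollary:blind}.

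For achievability I would start from the protocol of \cite{Barnum2001_2}, which compresses $\omega^{XAY}$ using quantum communication at rate $S(A)-S(Y)$ together with classical communication at rate $S(Y)$, and replace its quantum communication by teleportation: each of the $n(S(A)-S(Y))$ transmitted qubits is simulated by $2$ cbits and $1$ ebit. This yields a protocol that uses only classical communication and entanglement, at rates $C=2(S(A)-S(Y))+S(Y)=2S(A)-S(Y)$ and $E=S(A)-S(Y)$, with asymptotically unchanged fidelity. This realizes the corner $(C,E)=(2S(A)-S(Y),\,S(A)-S(Y))$, and every other pair satisfying the two inequalities follows by discarding surplus cbits and ebits, since the admissible set is a shifted quadrant.

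For the converse on $C$, I take an arbitrary classical protocol with rates $(C,E)$ and convert it to an entanglement-assisted quantum protocol by transmitting its $nC$ classical bits via dense coding, at a cost of $\tfrac{nC}{2}$ qubits of communication and $\tfrac{nC}{2}$ extra ebits. The result is an entanglement-assisted quantum code with $Q=\tfrac{C}{2}$ and total entanglement $E+\tfrac{C}{2}$, of the kind treated in Theorem~\ref{theorem: main}; its converse (blind case) forces $Q\geq\tfrac12(S(A)+S(A|Y))=S(A)-\tfrac12 S(Y)$, i.e. $C\geq 2S(A)-S(Y)$.

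The delicate point, which I expect to be the main obstacle, is the entanglement converse $E\geq S(A)-S(Y)$, because it does \emph{not} follow from the dense-coding reduction: that reduction together with the second bound of Theorem~\ref{theorem:complete rate region} only gives $C+E\geq S(A)$, which is already implied by $C\geq 2S(A)-S(Y)$ and so says nothing about $E$ alone. The reason is structural: in the entanglement-assisted quantum model entanglement can be \emph{created} by quantum communication, so there is no standalone entanglement lower bound to borrow, whereas classical communication cannot create entanglement. To capture this I would instead reduce to the free-classical model of \cite{Barnum2001_2}: given a classical protocol with rates $(C,E)$, distribute its shared state $\Phi_K$ by sending Bob's half down a noiseless quantum channel, which costs quantum rate exactly $E$, and then run the protocol over the free classical channel. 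This simulates the code in the setting of \cite{Barnum2001_2}, whose converse asserts that the quantum rate cannot fall below $S(A)-S(Y)$; hence $E\geq S(A)-S(Y)$. The two converse inequalities then coincide with the achievable quadrant, which closes the argument.
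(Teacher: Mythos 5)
Your proof is correct and follows essentially the same route as the paper: achievability by taking the protocol of \cite{Barnum2001_2} and teleporting its qubits, the $C$-converse by dense-coding the classical messages into the entanglement-assisted quantum model of Theorem~\ref{theorem: main}, and the $E$-converse by distributing the entanglement over a quantum channel and invoking the free-classical-communication converse of \cite{Barnum2001_2}. Your explicit remark on why the $E$ bound cannot come from the quantum-model converses (entanglement being creatable by qubits but not by cbits) is a clearer justification of the same reduction the paper uses, and you attribute teleportation versus dense coding to the correct directions, which the paper states somewhat loosely.
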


\begin{proof}
We start with the converse. 
The first inequality follows from Theorem \ref{theorem: main}, because with 
unlimited entanglement shared between Alice and Bob, 
$\frac{1}{2}(S(A)+S(A|Y))=S(A)-\frac{1}{2}S(Y)$ qubits of quantum 
communication is equivalent to $2S(A)-S(Y)$ bits of classical communication 
due to teleportation \cite{Bennett1993} and dense coding \cite{Bennett1992}. 
The second inequality follows from \cite{Barnum2001_2}, because with free 
classical communication, the quantum rate is lower bounded by $S(A)-S(Y)$ 
which, due to super dense coding \cite{Bennett1992}, is equivalent to sharing 
$S(A)-S(Y)$ ebits when classical communication is for free. 

The achievability of the corner point $(2S(A)-S(Y),S(A)-S(Y))$ 
follows from \cite{Barnum2001_2} because the compression protocol 
uses $S(A)-S(Y)$ qubits and $S(Y)$ bits of classical communication 
which is equivalent to using $S(A)-S(Y)$ ebits of entanglement and 
$2S(A)-2S(Y)+S(Y)$ bits of classical communication, due to dense coding \cite{Bennett1992}.
Other rate pairs are achievable by resource wasting. The rate region is depicted in Fig.~\ref{fig:C-E}. 
\end{proof}

\begin{figure}[ht] 
  \includegraphics[width=0.49\textwidth]{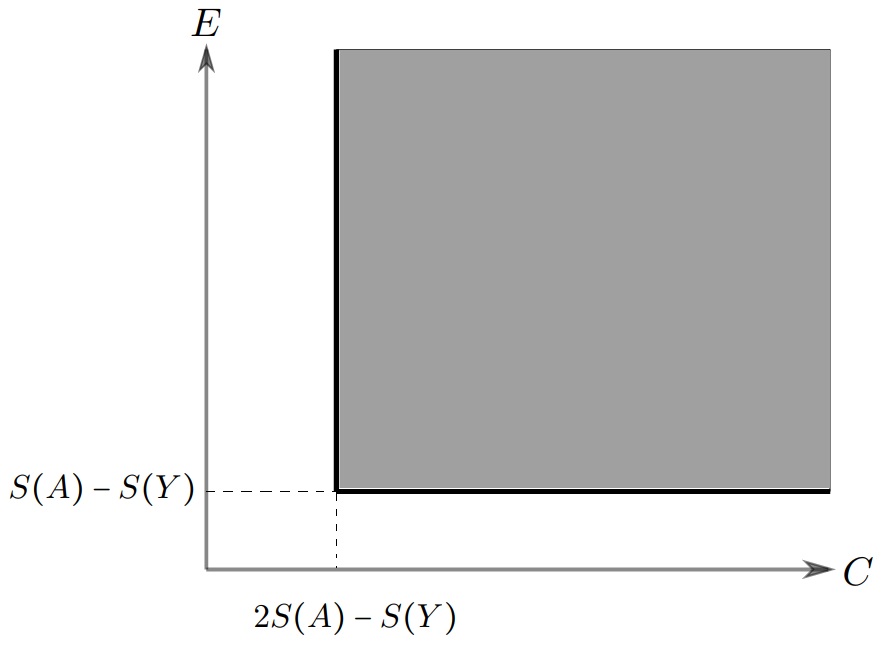} 
  \caption{The optimal rate region of classical and entanglement rates.}
  \label{fig:C-E}
\end{figure}

\begin{corollary}
\label{cor:visible}
In the visible case, our compression problem reduces to the visible version of 
Schumacher compression with entanglement assistance. In this case, 
according to Theorem \ref{theorem: main} the optimal asymptotic quantum 
rate is $Q=\frac{1}{2}S(A)$.
Moreover, a rate $E=\frac{1}{2}S(A)$ of entanglement
is consumed in the compression scheme, and $E+Q\geq S(A)$ in general.
\qed
\end{corollary}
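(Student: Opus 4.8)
The plan is to specialise Theorem~\ref{theorem: main} to the visible setting, so the only real work is to evaluate the conditional entropy $S(A|CY)$ for the source at hand. First I would unpack what the visible assumption $\ket{\sigma_x}=\ket{x}$ means: the side-information states are mutually orthogonal, so the register $C$ carries a perfect classical copy of the label $X$. Consequently, conditioning on $C$ is the same as conditioning on $X$.

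Next I would compute $S(A|C)$. In the visible case the reduced state is the cq-state $\omega^{AC}=\sum_x p(x)\proj{\psi_x}^A\otimes\proj{x}^C$, which is block-diagonal in the orthogonal basis $\{\ket{x}^C\}$; hence $S(AC)=S(C)=H(p)$, the zero entropy of each pure block $\proj{\psi_x}$ contributing nothing, and therefore $S(A|C)=0$. Since the reducibility label $Y=Y(X)$ is a deterministic function of $X$, and hence of $C$, adjoining $Y$ to the conditioning system changes nothing: $S(A|CY)=S(A|C)=0$.

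With this identity in hand the two displayed claims are immediate. Substituting $S(A|CY)=0$ into the rate formula of Theorem~\ref{theorem: main} yields the optimal quantum rate $Q=\frac12(S(A)+S(A|CY))=\frac12 S(A)$ together with the accompanying entanglement consumption $E=\frac12(S(A)-S(A|CY))=\frac12 S(A)$. Finally, the bound $E+Q\geq S(A)$ is not special to the visible case and follows directly from the second inequality of Theorem~\ref{theorem:complete rate region}.

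I expect no genuine obstacle here: the corollary is a one-line specialisation once $S(A|CY)=0$ is established, and that vanishing rests only on the orthogonality of the side-information states and the purity of each $\psi_x$. The single point to state carefully is \emph{why} conditioning on $CY$ collapses to conditioning on $X$, which is exactly the content of the visible assumption together with $Y$ being a function of $X$.
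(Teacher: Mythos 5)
Your proposal is correct and matches the paper's (implicit) reasoning: the corollary is stated without proof precisely because it is the one-line specialisation of Theorem~\ref{theorem: main} and Theorem~\ref{theorem:complete rate region} obtained once $S(A|CY)=0$ in the visible case, which is exactly the computation you carry out. Your verification that the orthogonality of the $\ket{x}^C$ forces $S(A|C)=0$ (and that adjoining the function $Y$ of $X$ changes nothing) is the right, and only, substantive step.
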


We remark that the visible compression assisted by unlimited entanglement is 
also a special case of remote state preparation considered in \cite{Bennett2005},
from which we know that the rate $Q=\frac12 S(A)$ is achievable and optimal.

The visible analogue of \cite{Barnum2001_2}, of compression using
qubit and cbit resources, was treated in \cite{Hayden2002}, where the
achievable region was determined as the union of all all pairs $(C,Q)$ such
that $Q\geq S(A|Z)$ and $C\geq I(X:Z)$, for any random
variable $Z$ forming a Markov chain $Z$---$X$---$A$. Compare to the
complicated boundary of this region the much simpler one of Corollary \ref{cor:visible},
which consists of two straight lines.

\medskip
We close by discussing several open questions for future work: 
First, the final discussion of different pairs of resources to compress suggests
that an interesting target would be the characterisation of the full triple resource
tradeoff region for $Q$, $C$ and $E$ together.

Secondly, we recall that our definition of successful decoding included 
preservation of the side information $\sigma_x^C$ with high fidelity. What is the
optimal compression rate $Q$ if the side information does not have to be preserved? 
For an example where this change has a dramatic effect on the optimal communication
rate, consider the ensemble $\cE$ consisting of the three two-qubit states
$\ket{0}^A\ket{0}^C$, $\ket{1}^A\ket{0}^C$ and $\ket{+}^A\ket{+}^C$
(where $\ket{+}=\frac{1}{\sqrt{2}}(\ket{0}+\ket{1})$), 
with probabilities $\frac12-t$, $\frac12-t$ and $2t$, respectively.
Note that $\cE$ is irreducible, hence for $t\approx 0$, we get an optimal
quantum rate of $Q\approx 1$, because $S(A) \approx S(A|C) \approx 1$.
However, by applying a CNOT unitary (with $A$ as control and $C$ as target),
the ensemble is transformed into $\cE'$ consisting of the states
$\ket{0}^A\ket{0}^{C'}$, $\ket{1}^A\ket{1}^{C'}$ and $\ket{+}^A\ket{+}^{C'}$.
The state of $A$ is not changed, only the side information, which is why we 
denote it $C'$. Hence we can apply Theorem \ref{theorem: main} to get a
quantum rate $Q\approx\frac12$, because $S(A) \approx 1$, $S(A|C) \approx 0$.

Thirdly, note that the lower bound $Q+E\geq S(A)$ in Theorem \ref{theorem:complete rate region}
holds with a strong converse (see the proof and \cite{Winter1999}).
But does $Q\geq \frac12 (S(A)+S(A|CY))$ hold as a strong converse rate
with unlimited entanglement? Likewise, in the setting of \cite{Barnum2001_2}
with unlimited classical communication, is $Q\geq S(A)-S(Y)$ a strong converse 
bound for the quantum rate?







\bibliographystyle{IEEEtran}

\end{document}